\documentclass{llncs}
\listfiles
\usepackage{amsmath,color}
\pagestyle{plain} 

\newcommand{\ie}{\textsl{i.e.}}
\newcommand{\BigOh}[1]{O\!\left(#1\right)}
\newcommand{\LittleOh}[1]{o\!\left(#1\right)}
\newcommand{\BigOmega}[1]{\Omega\!\left(#1\right)}
\newcommand{\LittleOmega}[1]{\omega\!\left(#1\right)}

\newcommand{\SU}[1]{\textsf{FF}_i\!\left(#1\right)}
\newcommand{\su}[1]{\textsf{FF}\!\left(#1\right)}
\DeclareMathOperator*{\argmin}{arg\,min}

\title{The Fresh-Finger Property \thanks{This research was partially supported by NSERC and MRI.}}
\author{John Howat\inst{1} \and John Iacono\inst{2} \and Pat Morin\inst{3}}
\institute{School of Computing\\Queen's University\\\email{howat@cs.queensu.ca}
\and Department of Computer Science and Engineering\\Polytechnic Institute of New York University\\\email{jiacono@poly.edu} \and
School of Computer Science\\Carleton University\\\email{morin@scs.carleton.ca}}

\begin{document}
\maketitle
\thispagestyle{plain}
\begin{abstract}
The \emph{unified property} roughly states that searching for an element is fast when the current access is close to a recent access. Here, \emph{close} refers to rank distance measured among all elements stored by the dictionary. We show that distance need not be measured this way: in fact, it is only necessary to consider a small working-set of elements to measure this rank distance. This results in a data structure with access time that is an improvement upon those offered by the unified property for many query sequences.
\end{abstract}


\section{Bounds for searching: notation and history}

Comparison-based searching is one of the most fundamental operations in computer science: given a set $S$ of $n$ totally ordered items, create a data structure that, given a query key $x$, will return the largest key in $S$ that is no larger than $x$. This is called \emph{predecessor search}. We focus on the case where $S$ is static, and thus can be assumed to be the integers from 1 to $n$. We refer to a search that returns $x$ as an \emph{access} to $x$. Let $A = \langle a_1, a_2, \ldots a_m \rangle$ denote a sequence of accesses to be performed on a data structure, with $m$ chosen to be sufficiently large to absorb any start-up costs.

Any comparison-based search data structure is, at its core, a method of choosing which comparisons to perform in order to execute an access. The data structure is essentially a way of encoding a comparison tree to execute each access---the data structure could do this in an explicit way like in a binary search tree, or in an implicit way as in the binary search algorithm. It has been long-known that information theory tells us that the worst-case time for an access must be $\BigOmega{\log n}$, and that $\BigOh{\log n}$ can be achieved with data structures such as binary search on a sorted array.

But, worst-case analysis is not the end of the story, as one can design data structures that execute operations in $\LittleOh{\log n}$ time if the operations have some kind of order to them. Thus, we can create data structures with access times that are functions of the access sequences themselves (or some distributional statistic of the access sequence)---these runtimes will still have $\BigOh{\log n}$ worst-case behavior, but will be faster over sequences that have certain desirable characteristics. We now review some runtime bounds that have been introduced, and the data structures whose runtimes have bounds (we will say a data structure has a bound to mean that its runtime can be bounded by the bound):

\paragraph{Static optimality bound.} If the number of searches in $A$ to $x$ is $f(x)$, then the runtime to search for $a_i$ is $\BigOh{\log \frac{m}{f(a_i)}}$.\footnote{In this paper,  $\log 0=\log 1=1$.} Knuth showed how to achieve this bound if the $f(\cdot)$-values are given in advance \cite{DBLP:journals/acta/Knuth71}.

\paragraph{Working-set bound.} Let $w_i(x)$ be the number of distinct items accessed since the last access to $x$ in $a_1, \ldots a_{i-1}$. A data structure has the working-set bound if an access to $a_i$ takes time $\BigOh{\log w_i(a_i)}$. The idea behind this is that if the accesses are restricted to a subset of $k$ items, then the accesses will take time $\BigOh{\log k}$ rather than $\BigOh{\log n}$. It has been shown that the working-set bound implies the static optimality bound in the amortized sense. Splay trees \cite{DBLP:journals/jacm/SleatorT85} have the working-set bound in the amortized sense, while the working-set structure \cite{DBLP:conf/soda/Iacono01a} was designed to have this bound in the worst-case.

\paragraph{Queueish bound \cite{DBLP:journals/algorithmica/IaconoL05}.} The working-set bound requires that items which were accessed recently take less time than those that have not been accessed in a while. The queueish bound reverses this and states that the time to access $a_i$ should be $\BigOh{\log( n-w_i(a_i))}$; thus any structure with the queueish bound will execute the least recently accessed item in constant time. No dictionary is known to have the queueish bound and it remains open whether such a dictionary can exist; however, it was shown that there is a structure with a close-to-queueish bound of $\BigOh{\log n-w_i(a_i)+\log \log n}$ amortized access time.

\paragraph{Dynamic finger bound \cite{DBLP:journals/siamcomp/ColeMSS00,DBLP:journals/siamcomp/Cole00}.} Let $d(x,y)$ be the number of keys between $x$ and $y$ in $S$ (this is just $|x-y|$ if $S$ is the integers from $1$ to $n$). The dynamic finger property says the cost to execute search $a_i$ is $\BigOh{\log d(a_{i-1},a_i)}$. Level-linked trees \cite{DBLP:journals/iandc/HoffmanMRT86} have the dynamic finger property in the worst-case, and splay trees have the amortized dynamic finger property.

\paragraph{Unified bound \cite{DBLP:journals/tcs/BadoiuCDI07}.} The dynamic finger bound and the working-set bounds are the best known bounds on the runtime of splay trees, yet neither implies the other and neither is tight. For example in the sequence $\langle 1,2,\ldots n, 1,2, \ldots \rangle$ the dynamic finger will give a bound of $\BigOh{1}$ per operation on average, while the working-set bound gives a bound of $\BigOh{\log n}$ per operation. In the sequence $\langle 1,n,1,n,\ldots \rangle$, the situation is reversed.  The unified bound was proposed as a natural combination of these two bounds;\footnote{In an unfortunate naming conflict, Sleator and Tarjan have a ``Unified Theorem'' for splay trees \cite[Theorem~5]{DBLP:journals/jacm/SleatorT85} and the bound in the Unified Theorem is also sometimes called the ``unified bound.''} 
an access is fast if it is close in key value to something that has been recently accessed. Formally, a data structure has the unified bound if accessing $a_i$ takes time $\BigOh{\log \min_j ( d(a_i,j)+w_i(j))}$. This clearly implies the working-set bound (set $j=a_i$) and the dynamic finger bound (set $j=a_{i-1}$). A non-tree structure was presented with the unified bound, and it is conjectured that splay trees have the unified bound. A binary search tree (BST) structure with the unified bound plus an additive $\BigOh{\log \log n}$ is known \cite{DBLP:conf/wads/DerryberryS09}, and a BST structure with the unified bound was claimed \cite{dthesis}, but later declared to be buggy \cite{wrong}.

\noindent
There are several issues that are important when considering a bound:

\paragraph{Static vs.~dynamic.} If a search algorithm uses the same search tree for every access, it is said to be \emph{static}, while if the comparisons performed to execute a given search depend upon the previous searches performed it is said to be \emph{dynamic}. The static optimality bound is the best bound possible if the search algorithm generates the same comparison tree for every access.

\paragraph{Online.} A bound where the runtime bound to execute $a_i$ is a function of the sequence $\langle a_1 \ldots a_i \rangle$ is said to be an online bound. All of the bounds listed above, except the static optimality, bound are online. The static optimality bound is not online because it is computed as a function of the frequency count over the entire length of a sequence.

\paragraph{Amortization.} For any operation there can be at most $\BigOh{2^k}$ different searches than can be done using at most $k$ comparisons. Any bound that at any time that has $\LittleOmega{2^k}$ different searches perform only $k$ comparisons for some value of $k$ means that the bound can not hold in the worst case. None of the bounds above require amortization, and if a bound does require amortization, that is probably a sign that it is somehow unnatural.

\paragraph{Binary Search Tree model.} Wilber \cite{DBLP:journals/siamcomp/Wilber89} formalized the binary search tree model; in this model the data structure is a binary search tree which can be restructured through the use of rotations. The set of sequences which binary search trees can execute quickly seems to be a reasonable classifier of those sequences that we consider to be natural. Without a restriction to the BST model, given any single access sequence, it is possible to create a data structure that will execute the searches in that sequence quickly, and others slowly. This is not possible in the BST model as there are deterministic sequences such as the bit reversal permutation that can not be executed faster than $\BigOh{\log n}$ amortized time.

The class of BST data structures also have the possibility that there may exist an online BST data structure that can execute every access sequence asymptotically as fast as the best BST data structure \emph{for that sequence}. Such a structure would be called \emph{dynamically optimal}; no BSTs are known to be dynamically optimal although spay trees and Lucas' trees \cite{lucas} are conjectured to be. Blum et.~al. \cite{blum} gave a non-tree data structure that runs within a constant factor of comparisons of any BST data structure, but requires superpoloynomial time to decide which comparisons to perform. Tango trees \cite{DBLP:journals/siamcomp/DemaineHIP07} are a BST that execute every sequence within a $\BigOh{\log \log n}$ factor of the best possible binary search tree. Of the bounds described above, no BST can have the queueish property, it is conjectured that there is a BST with the unified property, and the rest of the bounds described above are achievable by BST data structures.

\section{Problems with the unified bound}

The unified bound is the best proposed bound for binary search trees, and seems to be a reasonable combination of temporal locality and locality in keyspace. However, we will show that the unified bound has a flawed view of keyspace, and propose a new bound that attempts to rectify this flaw.

Recall that the unified property roughly states than access is fast when the current access is close to a recent access. For example, consider the following access sequence (assume $n$ is even and $n$ divides $m$): 
\begin{equation}
	A = \left\langle 1, \frac{n}{2}+1, 2, \frac{n}{2}+2, 3, \frac{n}{2}+3, \ldots, \frac{n}{2}, n \right\rangle^{m/n} \label{eq:bear}
\end{equation}
(The exponentiation denotes that the sequence is repeated $m/n$ times
to make a sequence of length $m$.)
Observe that, except for the first two accesses in each cycle, every access is at distance one from the element accessed two accesses ago.  A dictionary with the unified
property would therefore perform this sequences in time at most
\[
   2(m/n)\cdot \BigOh{\log n} + (n-2)(m/n)\cdot \BigOh{1} \in \BigOh{m}
\]
for an amortized cost of $\BigOh{1}$ per access.  

Next, consider the following access sequence:
\begin{equation}
	A' = \left\langle K, \frac{n}{2}+K, 2K, \frac{n}{2}+2K, 3K, \frac{n}{2}+3K, \ldots \frac{n}{2},n\right\rangle^{mK/2n}
\end{equation}
where $n$ is a multiple of $K$, $mK$ is a multiple of $n$, and $n^{1/4}\le K\le\sqrt{n}$. For this sequence, the unified bound is useless: any element accessed less than $K/2$ time units in the past is at distance at least $K$ from the currently accessed element, so the cost of every access is $\BigOmega{\log K} = \BigOmega{\log n}$.

On the other hand, the sequence $A'$ is not very different from $A$.  Indeed $A'$ can be viewed as the sequence $A$ over a larger set, $S$, in which a $(1-1/K)$ fraction of the elements are never accessed.  Intuitively, in a good data structure these irrelevant elements should ``fall out of the way'' in order to speed up accesses to the important elements (multiples of $K$).


The sequence $A'$ demonstrates the problem with the unified property: The distance function $d(x,y)$ simply measures the number of keys between $x$ and $y$. But, suppose some key values have not been accessed in a long time relative to $x$ and $y$, or in the extreme case, have never been accessed. Why should the number of such keys between $x$ and $y$ influence the runtime of accessing keys such as $x$ and $y$?
Put simply, they should not. Data structures such as splay trees will have items that are never accessed ``percolate'' to to bottom of the structure, and the runtime of a splay tree with a subtree of never-accessed keys is identical to the runtime if the keys are not there. Thus we need a more nuanced $d(\cdot,\cdot)$ function that ``forgets'' keys that have not been accessed in a while when computing key distance.

In this paper, we will expand on the idea of counting only recently accessed elements towards the distance between elements stored by the dictionary. The remainder of the paper is organized in the following way. In Section~\ref{section:definitions}, we define a new, stronger version of the unified property. In Section~\ref{section:main}, we show how to construct a dictionary that has this new property. We conclude with possible directions for future research in Section~\ref{section:conclusion}.

\section{Defining the fresh-finger property}
\label{section:definitions}

In this section, we define a new, stronger version of the unified property that we term the \emph{fresh-finger property}. Recall that $A = \langle a_1, a_2, \ldots, a_m \rangle$ is our access sequence. Define
\begin{displaymath}
	l_i(x) = \min \left( \{ \infty \} \cup \{ j > 0 \;|\; a_{i-j} = x \} \right)
\end{displaymath}

One can think of $l_i(x)$ as the most recent time $x$ has been queried in $A$ before time $i$. We then define
\begin{displaymath}
	w_i(x) =
		\begin{cases}
			n 																	& \text{if } l_i(x) = \infty \\
			|\{ a_{i-l_i(x)+1}, \ldots, a_i \}|	& \text{otherwise}
		\end{cases} 
\end{displaymath}
which is the \emph{working-set number} of $x$ at time $i$. We also define $W_i(j)$ to be the set of all elements $x \in S$ at time $i$ such that $w_i(x) \le j$, \ie, the set of all elements with working-set number at most $j$ at time $i$.  Next, we define $d_T(x,y)$ to be the rank distance between $a$ and $b$ in the set $T$, \ie,
\[
   d_T(x,y) 
    = \begin{cases}   
      |\{z\in T: x< z \le y\}| & \text{if $x<y$} \\
      |\{z\in T: y< z \le x\}| & \text{otherwise.}
    \end{cases}
\]
Finally, define
\begin{displaymath}
	y_i(x, T) = \argmin_{y \in T} w_i(y) + d_T(x,y)
\end{displaymath}

We are now ready to define the fresh-finger property. In terms of the preceding notation, the unified property states that the time to access the element $x \in S$ at time $i$ is 
\begin{equation}
	\BigOh{\log (w_i(y_i(x,S)) + d_S(x,y_i(x,S)))}
        \label{eq:u}
\end{equation}
A first attempt at defining the fresh-finger property might be 
\begin{equation}
	\BigOh{\log (w_i(y_i(x,W_i(w_i(x)))) + d_{W_i(w_i(x))}(x,y_i(x,W_i(w_i(x)))))}. \label{eq:su-1}
\end{equation}
Equation~\eqref{eq:su-1} should be contrasted with the definition of the unified property defined by \eqref{eq:u}.  In \eqref{eq:su-1}, the rank distance between $x$ and other elements is measured only with respect to the set $W_i(w_i(x))$, the set of elements that have been accessed since the last access to $x$ (\ie, a set of \emph{fresh fingers}).  In \eqref{eq:u}, the rank distance is measured with respect to $S$, the entire set of elements stored in the data structure.

Since $W_i(w_i(x))\subseteq S$, \eqref{eq:su-1} is certainly a stronger requirement.  Unfortunately, it is too strong, and there is no comparison-based data structure that can achieve this bound in the worst-case. To see this, consider the access sequence
\begin{displaymath}
	\left\langle 1,2,3,\ldots,n,1,x \right\rangle
\end{displaymath}
where $x \in \{1,\ldots,n\}$.  Let $i={n+2}$ (so that $a_i =x$ represents the second access to $x$).  Then $W_i(w_i(x)) = \{1,x,x+1,x+2,\ldots,n\}$.  But then the rank difference, $d_{W_i(w_i)}(x,1)$, between $x$ and $1$ is at most 1 and $w_i(1)=1$, so, according to \eqref{eq:su-1}, the time to accessing $x$ is at most\[
   \BigOh{\log(w_i(1)+d_{W_i(w_i)}(x,1)} = \BigOh{1}
\]
But this is true for any $x\in\{1,\ldots,n\}$, so for \emph{any} of the $n$ choices for $x$, \eqref{eq:su-1} requires that a data structure execute the access in constant time. This is not information-theoretically possible; accessing a randomly chosen $x\in\{1,\ldots,n\}$ requires at least $\log_2 n$ comparisons in expectation.  

From the preceding discussion, we conclude that the set in which we measure rank distance should be expanded.  This leads to the following definition:

\begin{definition} A data structure has the fresh-finger property if its runtime for an access is bounded by
\begin{displaymath}
	\BigOh{\log (w_i(y_i(x,W_i(w_i(x)^2))) + d_{W_i(w_i(x)^2)}(x,y_i(x,W_i(w_i(x)^2))))}.
\end{displaymath}
\end{definition}

Observe that the set $W_i(w_i(x)^2)$ contains elements that have been accessed less recently than $x$. These additional elements will allow us to support the rest of the access cost while respecting information-theoretic lower bounds. The intuition for expanding the set under consider in this manner is the fact that the data structure will consist of substructures that increase doubly-exponentially in size, and so by squaring the working-set number under consideration, we take advantage of elements in an adjacent substructure.

For brevity, we define
\begin{displaymath}
	y_i(x) = y_i(x,W_i(w_i(x)^2))
\end{displaymath}
and
\begin{displaymath}
	\SU{x} = \log (w_i(y_i(x)) + d_{W_i(w_i(x)^2)}(x,y_i(x)))
\end{displaymath}

As an example, consider the following access sequence, where $15$ is the element currently being accessed at the end of the sequence.
\begin{displaymath}
2, 3, 4, 5, 6, 7, 8, \overbrace{9, 10, 11, 12, 13, \underbrace{14, 15, 16, 1}_{W_i(w_i(15))}}^{W_i(w_i(15)^2)}, 15
\end{displaymath}

The original definition of the fresh-finger property uses $W_i(w_i(15))$, while the modified definition uses $W_i(w_i(15)^2)$. This modified definition allows for $9,10,11,12,13$ to contribute to the rank distance. This results in a query time that does not violate information-theoretic lower bounds, since it does not result in a situation where all $n$ queries must be executed in constant time. 
Before presenting a data structure that (nearly) achieves the fresh-finger property, it is worth doing a sanity-check of this definition.  In particular, we confirm that there exists (distributions over) sequences $A=\langle a_1,\ldots,a_m \rangle$ such that
\[
   \su{a_1,\ldots,a_m} = \sum_{i=1}^m \SU{a_i}
\]
is a lower-bound for accessing $a_1,\ldots,a_m$.

\begin{theorem}
  For all positive integers $n$, $r\le n$, and $m\ge 2r\log n$, there exists a distribution, $\mathcal{A}$, over $\{1,\ldots,n\}^m$ such that, for any comparison-based dictionary data structure, $D$, that stores $\{1,\ldots,n\}$, and an access sequence $A=a_1,\ldots,a_m$ drawn from $\mathcal{A}$
\begin{enumerate}
 \item $\su{a_1,\ldots,a_m}=\BigOh{m\log r}$.
 \item the expected number of comparisons performed by $D$ while accessing $A$ is $\BigOmega{m\log r}$.
\end{enumerate}
\end{theorem}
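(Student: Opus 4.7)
The plan is to exhibit a distribution $\mathcal{A}$ defined as follows. First sample a random $r$-subset $R\subseteq \{1,\ldots,n\}$ uniformly, then draw $a_1,\ldots,a_m$ i.i.d.~uniformly from $R$. I will verify the upper bound on $\su{\cdot}$ and the matching information-theoretic lower bound separately.

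For the upper bound (part 1), I will split the access sequence into \emph{cold} accesses, where $a_i\notin\{a_1,\ldots,a_{i-1}\}$, and \emph{warm} ones. There are at most $r$ cold accesses; for each I will simply observe that $w_i(a_i)=n$, so $W_i(w_i(a_i)^2)=S$ and picking $y=a_i$ in the $\argmin$ defining $y_i(a_i)$ already gives $\SU{a_i}\le \log n$. Each warm access satisfies the crucial bound $w_i(a_i)\le r$, because every access lies in $R$, so at most $r$ distinct elements can appear since the last occurrence of $a_i$. Again taking $y=a_i\in W_i(w_i(a_i)^2)$ in the $\argmin$ yields $\SU{a_i}\le \log w_i(a_i)\le \log r$. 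Summing over both classes gives
\[
   \su{a_1,\ldots,a_m} \;\le\; r\cdot\BigOh{\log n} + m\cdot\BigOh{\log r} \;=\; \BigOh{m\log r},
\]
where the last equality uses the hypothesis $m\ge 2r\log n$, which implies $r\log n\le m/2\le m\log r$ for $r\ge 2$.

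For the lower bound (part 2), I will use the standard decision-tree / entropy argument. Conditional on the history $a_1,\ldots,a_{i-1}$, the comparison-based data structure $D$ executes the query $a_i$ by traversing some deterministic binary decision tree $T_i$ whose leaves correspond to distinct possible query values (since $S=\{1,\ldots,n\}$ and the predecessor of $a_i$ uniquely identifies $a_i$). Shannon's theorem then gives
\[
   \E[\,\text{comparisons at step } i \mid a_1,\ldots,a_{i-1}\,] \;\ge\; H(a_i \mid a_1,\ldots,a_{i-1}).
\]
The remaining task is to show that $H(a_i \mid a_1,\ldots,a_{i-1})\ge \log_2 r$ pointwise in the history. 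I will compute this directly: given that $k$ distinct elements of $R$ have already appeared, $a_i$ is each of those $k$ elements with probability $1/r$, and each unseen element $x\in\{1,\ldots,n\}\setminus R_{i-1}$ has probability $\frac{r-k}{r(n-k)}$ by the symmetry of the random subset $R$. A short convexity calculation (or just direct evaluation at the extremes $k=0$ and $k=r$) then confirms $H(a_i\mid \text{history})\ge \log r$. Summing over $i$ yields $\BigOmega{m\log r}$.

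The only real subtlety is the entropy estimate in the lower bound, since the analyst might worry that the structure learns $R$ gradually and thereby gets a shrinking support to distinguish among. The calculation above resolves this: whatever information about $R$ the structure has extracted, the residual posterior over $a_i$ still has entropy at least $\log r$, because unseen elements of $R$ are uniformly distributed over the unseen part of $\{1,\ldots,n\}$. Everything else is bookkeeping.
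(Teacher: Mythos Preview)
Your proof is correct, but it takes a more complicated route than the paper's.  The paper uses the \emph{fixed} subset $\{1,\ldots,r\}$ rather than a random $r$-subset: it sets $a_i=i$ for $i\le r$ (a deterministic warm-up) and then draws $a_i$ uniformly from $\{1,\ldots,r\}$ for $r<i\le m$.  With this choice the lower bound is immediate, since after the warm-up each $a_i$ is uniform on an $r$-element set independent of the history, so $H(a_i\mid a_1,\ldots,a_{i-1})=\log_2 r$ with no posterior computation needed.  The upper bound is then exactly your cold/warm split, with the first $r$ accesses playing the role of the cold ones.

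Your random-subset construction also works, and your conditional-entropy calculation is right: with $k$ distinct elements seen, one gets
\[
   H(a_i\mid\text{history}) \;=\; \log r \;+\; \frac{r-k}{r}\,\log\frac{n-k}{r-k} \;\ge\; \log r,
\]
so the lower bound goes through.  The trade-off is that you buy nothing from randomizing $R$---the theorem only asks for existence of \emph{some} hard distribution---while you pay for it with the extra entropy bookkeeping.  The paper's fixed-subset version is strictly simpler and yields the same conclusion.
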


\begin{proof}
  The sequence $A$ is defined as $a_i=i$ for $i\le r$ or $a_i$ is selected
  uniformly at random from the set $\{1,\ldots,r\}$ for $r< i\le m$.  This
  choice of $A$ immediately implies that 
  \[
     (m-r)\log_2 r\ge (m/2)\log_2 r = \BigOmega{m\log r}
  \]
  is a lower-bound on the expected number of comparisons performed by $D$
  while accessing the (randomly chosen values) $a_{r+1},\ldots,a_m$.
  This establishes Part~2 of the result.

  On the the other hand, to establish Part~1, we have
  \[
      w_i(a_i) 
        \le \begin{cases}
          n & \text{for $i\le r$} \\
          r & \text{for $r< i\le m$}
        \end{cases}
  \]
  Thus,
  \[
     \su{a_1,\ldots,a_m} \le r\log n + (m-r)\log(2r) = \BigOh{m\log r}
  \]
  since $m\ge 2r\log n$.
\end{proof}

\section{Towards the fresh-finger property}
\label{section:main}

In this section, we describe a data structure that comes to within a small additive term of achieving the fresh-finger property.

\subsection{The data structure}
\label{section:main:datastructure}

The data structure consists of $k$ finger search trees $T_1,T_2,\ldots,T_k$ as well as $k$ accompanying queues $Q_1,Q_2,\ldots,Q_k$. Recall that finger search trees can support insertions and deletions in $\BigOh{1}$ worst-case time (when provided with a pointer to the element to be deleted) and finger searches in $\BigOh{\log d}$ worst-case time, where $d$ is the distance between the element being searched for and the supplied pointer into the data structure \cite{DBLP:journals/jcss/BrodalLMTT03}. 

The size of $T_j$ is $2^{2^j}$, except for $T_k$ which has size $n$. It follows that $k$ is $\BigOh{\log \log n}$. We will maintain the invariant that $T_j \subset T_{j+1}$ for all $1 \le j < n$. The queue $Q_j$ contains exactly the same elements as $T_j$ in the order they were inserted into $T_j$. Pointers are maintained between elements in the queue and corresponding elements in the finger search tree.

To perform a search, we will perform finger searches for $x$ in $T_1,T_2,\ldots$ until we find $x$ for the first time (say, $x \in T_j$). In $T_1$, we use an arbitrary element as the starting finger for the search. In all other trees, we run two finger searches for $x$ in parallel: one from the successor of the element found in the previous finger search tree, and one from the predecessor of the element found in the previous finger search tree. As soon as the first of these two searches terminates, we stop the other. 

To restructure the data structure after we have found $x \in T_j$, we must insert $x$ into $T_1,T_2,\ldots,T_{j-1}$ (note that $x$ is not present in any of these trees, since if it were, it would have already been found) and enqueue $x$ in $Q_1,Q_2,\ldots,Q_{j-1}$. At this point, we note that each of $T_1,T_2,\ldots,T_{j-1}$ and $Q_1,Q_2,\ldots,Q_{j-1}$ are too big. We therefore dequeue the oldest element in each of $Q_1,Q_2,\ldots,Q_{j-1}$ and delete the corresponding elements in $T_1,T_2,\ldots,T_{j-1}$.

\subsection{Analysis}
\label{section:main:analysis}

Recall that we are aiming for a running time of
\begin{displaymath}
	\BigOh{\SU{x}} = \BigOh{\log (w_i(y_i(x)) + d_{W_i(w_i(x)^2)}(x,y_i(x)))}
\end{displaymath}

Consider a search for $x$ at time $i$, and consider the element $y_i(x)$. Suppose $x$ first appears in $T_j$ and $y_i(x)$ first appears in $T_{j'}$. Because $x$ first appears in $T_j$, we have that $w_i(x) \ge 2^{2^{j-1}}$. Therefore, $j \le \log \log w_i(x) + \BigOh{1}$. Similar reasoning shows $w_i(y_i(x)) \ge 2^{2^{j'-1}}$, so that $j' \le \log \log w_i(y_i(x)) + \BigOh{1}$. We consider three cases, based on how $j$ compares with $j'$:

If $j \le j'$ (\ie, $x$ appears no later than $y_i(x)$),\footnote{In fact, this case can only occur when $j=j'$, since otherwise $w_i(y_i(x)) > w_i(x)$, and so $w_i(y_i(x)) + d(x,y) > w_i(x) + d(x,x)$, which contradicts the definition of $y_i(x)$.} then the running time follows easily: $x$ has working-set number $w_i(x) \le w_i(y_i(x))$. The element $x$ can thus be found in time $\sum_{l=1}^j 2^l = \BigOh{2^j}$, which is $\BigOh{2^{j'}} = \BigOh{\log w_i(y_i(x))}$.


The more interesting case occurs when $j > j'$ (\ie, $x$ appears after $y_i(x)$).  In this case, the algorithm will reach the tree, $T_{j'}$, containing $y_i(x)$ in time 
\[  
    \sum_{\ell=1}^{j'} \BigOh{\log 2^{2^\ell}}=\sum_{\ell=1}^{j'} 2^\ell = \BigOh{2^{j'}} = \BigOh{\log w_i(y_i(x))}
    \enspace .
\]
The search in $T_{j'}$ finds both the predecessor and and successor, $y_1$ and $y_2$ of $x$ in $T_{j'}$.  That is,
\[
       y_1 \le x \le y_2
\]
and $y_i(x)$ is not in the open interval $(y_1,y_2)$.  In particular,
for any set $T$ one of $y_1$ or $y_2$, say $y_1$, has
\[
    d_{T}(x,y_1) \le d_{T}(x,y_i(x))
\]
Indeed, from this point onwards, every search in $T_{\ell}$, for each $\ell\in\{j'+1,\ldots,j\}$, $y_\ell'$ such that
\[
    d_{T}(x,y_\ell') \le d_{T}(x,y_i(x)) \enspace .
\]

The elements in $T_{j'+1},\ldots,T_{j-1}$ are all in $W_i(w_i(x))$, and so
the remaining searches in $T_{j'+1},\ldots,T_{j-1}$ therefore take a total of at most
\begin{eqnarray*}
&&(j-j'-1)\BigOh{\log d_{W_i(w_i(x))}(x,y_i(x))}\\
&=&\BigOh{(\log d_{W_i(w_i(x))}(x,y_i(x)))(\log\log w_i(x))}
\end{eqnarray*}
time.


At last, the final search, in $T_j$ is the expensive one, since the only
guarantee we have on the elements of $T_j$ are that their working-set number
is at most $w_i(x)^2$.  Thus, the elements in $T_j$ are a subset of the elements in $W_i(w_i(x)^2)$ and the time to search in $T_j$ is at most
\[
    d_{W_i(w_i(x)^2)}(x,y_i(x)) \enspace .
\]

In either case, the total search time thus far is at most 
\begin{eqnarray*}
&&\BigOh{\log w_i(y_i(x))}\\&+& \BigOh{(\log d_{W_i(w_i(x))}(x,y_i(x)))(\log \log w_i(x)) + \log d_{W_i(w_i(x)^2)}(x,y_i(x))}
\end{eqnarray*}

At this point, $x$ has been found and we must now adjust the data structure. First, $x$ must be inserted in $T_1,T_2,\ldots,T_{j-1}$. Because we have a finger for $x$ inside each of these structures, this takes total time $\BigOh{\log \log w_i(x)}$. Enqueuing $x$ in each of $Q_1,Q_2,\ldots,Q_{j-1}$ also takes $\BigOh{j} = \BigOh{\log \log w_i(x)}$. The subsequent deletions and dequeueings of the oldest elements in $Q_1,Q_2,\ldots,Q_{j-1}$ and $T_1,T_2,\ldots,T_{j-1}$ take a total of $\BigOh{j} = \BigOh{\log \log w_i(x)}$ time as well, since the dequeueing operation takes $\BigOh{1}$ time and provides a pointer to the node in the corresponding tree where the deletion must be performed. Therefore, all restructuring operations take time $\BigOh{\log \log w_i(x)}$.

We therefore have

\begin{theorem}
\label{theorem:main}
There exists a static dictionary over the set $\{1,2,\ldots,n\}$ that supports querying element $x$ in worst-case time
\begin{displaymath}
\BigOh{\SU{x} + (\log d_{W_i(w_i(x))}(x,y_i(x)))(\log \log w_i(x))}
\end{displaymath}
\end{theorem}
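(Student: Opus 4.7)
The plan is to formalize the three-phase argument already sketched in Section~\ref{section:main:analysis}: identify the trees in which $x$ and $y_i(x)$ first reside, bound the finger searches phase by phase, and then add the restructuring cost. First I would establish two structural facts about the dictionary. If $j$ is the smallest index with $x\in T_j$, then because $T_{j-1}$ stores at most its $2^{2^{j-1}}$ most-recently-enqueued items and an item is enqueued whenever a search reaches a strictly deeper tree, one can show $w_i(x)\ge 2^{2^{j-1}}$, hence $j\le \log\log w_i(x)+\BigOh{1}$; the analogous bound $j'\le \log\log w_i(y_i(x))+\BigOh{1}$ holds for the first index $j'$ with $y_i(x)\in T_{j'}$. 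Dually, every element in $T_\ell$ has working-set number at most $2^{2^\ell}$, so $T_\ell\subseteq W_i(2^{2^\ell})$ for $\ell<k$; this is what lets me convert tree-indexed rank distances into working-set-indexed ones.

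I would then split on how $j$ compares with $j'$. If $j\le j'$ (which forces $j=j'$, as noted in the paper's footnote), the finger searches through $T_1,\ldots,T_j$ cost $\sum_{\ell=1}^{j}\BigOh{\log 2^{2^\ell}}=\BigOh{2^j}=\BigOh{\log w_i(y_i(x))}=\BigOh{\SU{x}}$. If $j>j'$, the same geometric sum pays $\BigOh{\log w_i(y_i(x))}$ to reach $T_{j'}$, and the search in $T_{j'}$ returns the predecessor $y_1$ and successor $y_2$ of $x$. Since $y_i(x)\in T_{j'}\subset T_\ell$ for every $\ell\ge j'$, at least one of $y_1,y_2$ sits at rank distance $\le d_{T_\ell}(x,y_i(x))$ inside each subsequent $T_\ell$. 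For $\ell\le j-1$ the containment $T_\ell\subseteq W_i(w_i(x))$ holds, because $|T_{j-1}|\le w_i(x)$ by the choice of $j$, so each of these $j-1-j'$ finger searches costs $\BigOh{\log d_{W_i(w_i(x))}(x,y_i(x))}$ and their sum is $\BigOh{(\log d_{W_i(w_i(x))}(x,y_i(x)))(\log\log w_i(x))}$. The final search in $T_j$ uses only the weaker containment $T_j\subseteq W_i(w_i(x)^2)$ and so contributes $\BigOh{\log d_{W_i(w_i(x)^2)}(x,y_i(x))}=\BigOh{\SU{x}}$.

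Finally, I would tally the restructuring cost: once fingers in $T_1,\ldots,T_{j-1}$ have been obtained during the downward search, each finger insertion is $\BigOh{1}$; enqueues into $Q_1,\ldots,Q_{j-1}$ are $\BigOh{1}$ each; and each dequeue hands over a pointer so the matching tree deletion is also $\BigOh{1}$. The total is $\BigOh{j}=\BigOh{\log\log w_i(x)}$, which is dominated by the $(\log d_{W_i(w_i(x))}(x,y_i(x)))(\log\log w_i(x))$ term. The main delicate point is showing that the finger returned by the search in $T_{j'}$ remains at rank distance $\BigOh{d_{W_i(w_i(x))}(x,y_i(x))}$ in every intermediate $T_\ell$, rather than merely $\BigOh{d_{T_\ell}(x,y_i(x))}$ or $\BigOh{d_S(x,y_i(x))}$; this hinges on the chain $T_{j'}\subset T_{j'+1}\subset\cdots\subset T_{j-1}\subseteq W_i(w_i(x))$ and is precisely why the definition of the fresh-finger property squares $w_i(x)$—the doubling-up is needed only to absorb the very last, most expensive tree $T_j$.
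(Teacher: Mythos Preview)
Your proposal is correct and follows essentially the same route as the paper: the identical placement bounds $j\le\log\log w_i(x)+\BigOh{1}$ and $j'\le\log\log w_i(y_i(x))+\BigOh{1}$, the same $j\le j'$ versus $j>j'$ case split, the same three-phase decomposition (reach $T_{j'}$, traverse $T_{j'+1},\ldots,T_{j-1}$ using the containment in $W_i(w_i(x))$, then handle $T_j$ via $T_j\subseteq W_i(w_i(x)^2)$), and the same $\BigOh{\log\log w_i(x)}$ restructuring tally. If anything, you make the containment $T_{j-1}\subseteq W_i(w_i(x))$ and the finger-propagation argument slightly more explicit than the paper does.
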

	
\section{Conclusion}
\label{section:conclusion}

In this paper, we defined a stronger version of the unified property and described a data structure that achieves it to within a small additive term. Instead of computing rank distance over the entire dictionary, we compute rank distance only within a working-set containing an element that is close to a recently-accessed element.

There are several possible directions for future research.

\begin{enumerate}
\item One can measure distances within the set $W_i(w_i(x)^{1+\epsilon})$ instead of the set $W_i(w_i(x)^2)$ by changing how the substructures grow. Is it possible to reduce this further? For example, is it possible to measure distances within the set $W_i(\BigOh{w_i(x)})$?

\item We argued that it is not possible to measure within $W_i(w_i(x))$ in the \emph{worst case}. Is it possible to measure within this set in the amortized sense?

\item Can the additive term in Theorem~\ref{theorem:main} be reduced? It seems difficult to reduce this term below $\BigOmega{\log \log w_i(x)}$ using an approach similar to the one presented here, since elements must shift through at least this many substructures.

\item We have only considered the case where $S$ is static. Is it possible to maintain the fresh-finger property while supporting insertions into and deletions from $S$?
\end{enumerate}

\bibliographystyle{splncs}
\bibliography{dblp,other}

%
%
%
%
%
%
%
\end{document}